\newtheorem{lem}{Lemma}
\newtheorem{prop}{Proposition}
\newcommand{\minus}{\scalebox{0.5}[1.0]{$-$}}
\begin{document}

\title{Performance Characterization of Continuous Reconfigurable Intelligent Surfaces

\author{

\IEEEauthorblockN{Amy S. Inwood\IEEEauthorrefmark{1}, Peter J. Smith\IEEEauthorrefmark{2}, Mahmoud Alaaeldin\IEEEauthorrefmark{1},~and Michail Matthaiou\IEEEauthorrefmark{1}}

\IEEEauthorblockA{\IEEEauthorrefmark{1}Centre for Wireless Innovation (CWI), Queen’s University Belfast, Belfast BT3 9DT, U.K.}

\IEEEauthorblockA{\IEEEauthorrefmark{2}School of Mathematics and Statistics, Victoria University of Wellington, Wellington, New Zealand}

\IEEEauthorblockA {Emails: \{a.inwood, m.alaaeldin, m.matthaiou\}@qub.ac.uk, peter.smith@vuw.ac.nz}
    \vspace{-0.45cm}
}    

\thanks{This work is a contribution by Project REASON, a UK Government funded project under the Future Open Networks Research Challenge (FONRC) sponsored by the Department of Science Innovation and Technology (DSIT). It was also supported by the U.K. Engineering and Physical Sciences Research Council (EPSRC) (grants No. EP/X04047X/1 and EP/X040569/1). The work of M. Alaaeldin and M. Matthaiou was supported by the European Research Council (ERC) under the European Union’s Horizon 2020 research and innovation programme (grant agreement No. 101001331).}}

\maketitle

\begin{abstract}
We consider a reconfigurable intelligent surface (RIS) that can implement a phase rotation continuously over the whole surface rather than via a finite number of discrete elements. Such an RIS can be considered a design for future systems where advances in metamaterials make such an implementation feasible or as the limiting case where the number of elements in a traditional RIS increases in a given area. We derive the optimal RIS design for the single-user (SU) scenario assuming a line-of-sight (LoS) from the RIS to the base station (BS) and correlated Rayleigh fading for the other links. We also derive the associated optimal signal-to-noise ratio (SNR) and its mean, a bound on the mean spectral efficiency (SE), an approximation to the SNR outage probability and an approximation to the coefficient of variation for the investigation of channel hardening.
\end{abstract}

\begin{IEEEkeywords}
Continuous surface, outage, Rayleigh fading, reconfigurable intelligent surface (RIS), SNR, spectral efficiency.
\end{IEEEkeywords}

\section{Introduction}
Interest in RISs continues to grow, driven by the ability of an RIS to manipulate the wireless channel with very little power consumption. Typical RIS implementations involve an array of large numbers of reflective elements, where each element is tunable and adjusts the phase of the reflected signal. An RIS can be used to enhance the SNR and other system performance metrics, avoid blockages and increase the channel rank. 

Despite the focus of most work on discrete elements, the fundamental ideas of RISs relate to controlled reflections at a surface and many emerging works consider densely populated arrays or continuous surfaces. Examples of these include holographic massive multiple-input and multiple-output (MIMO) \cite{demir_channel_2022, sanguinetti_wavenumber_2023} holographic RIS \cite{huang_holographic_2020, wan_terahertz_2021}, large intelligent surfaces (LIS) \cite{dardari_communicating_2020, williams_communication_2020} and stacked intelligent metasurfaces (SIM) \cite{an_stacked_2024}. In \cite{smith_continuous_2024}, a statistical analysis was performed on a system where a single antenna user equipment (UE) communicates with a finite dimension continuous surface receiver. This system consisted of a single link. In terms of a typical three-link RIS system with UE-BS, UE-RIS and RIS-BS channels, the work in the discrete domain in \cite{singh_optimal_2022} has derived an optimal RIS design for the fundamental SU scenario and evaluated the mean SNR, a bound on the mean SE and an approximation to the outage probability. However, none of this analysis has been conducted in the continuous domain. Therefore, an open research problem is the statistical analysis of key performance metrics when an RIS surface is truly continuous (i.e. each point on the surface can effect the desired phase shift). Note that this work is non-trivial since it requires continuous spatial transformations.
Hence, in this work, we make the following contributions:
\begin{itemize}
    \item  We derive the optimal SU continuous RIS design. 
    \item Using this optimal design:
    \begin{itemize}
        \item We derive the associated optimal SNR and its mean, and the approximate variance.
        \item We bound the mean SE.
        \item We approximate the outage probability.
        \item We approximate the coefficient of variation (CV) to investigate the effects of channel hardening.
    \end{itemize}
\end{itemize}

The benefits of this analysis are twofold: These new expressions provide insights into continuous RIS performance while also providing upper bounds on the performance of a discrete RIS as the number of elements in a fixed area tends to infinity. This work also sets up the system from a continuous perspective for further work, such as investigations into more complex channel models, different UE configurations, and the impact of surface imperfections.

\textit{Notation}: Upper and lower boldface letters represent matrices and vectors, respectively, and $\mathbf{M}_{r,s}$ is the $(r,s)$-th element of $\mathbf{M}$. Two exceptions to this convention occur in the notation for the RIS channels as discussed in Section \ref{sec:sysmodel}. 
The statistical expectation is denoted $\mathbb{E}[\cdot]$, $\mathrm{Var}[\cdot]$ denotes the variance and $\mathbb{P}(A)$ denotes the probability of event $A$; $\mathcal{CN}(\boldsymbol\mu,\mathbf{Q})$ represents a complex Gaussian distribution with mean $\boldsymbol\mu$ and covariance matrix $\mathbf{Q}$; ${}_2F_1(a,b,c;z)$ is the Gaussian hypergeometric function, $\gamma(\cdot,\cdot)$ is the incomplete gamma function, $\Gamma(\cdot)$ is the gamma function while $E(\cdot)$ and $K(\cdot)$ are the elliptic integrals of the first and second kind; $(\cdot)^*$ and $(\cdot)^\dagger$ represent the complex conjugate and Hermitian transpose operators, respectively. The angle of a complex number, $z$, is denoted $\angle{z}$.

\section{System Model}\label{sec:sysmodel}
We consider the uplink RIS-aided system in Fig. \ref{fig:LCR_Paper_System_Diagram}, where a continuous rectangular RIS surface, of width $W$ and height $H$, is located near a BS with $M$ antennas. One single-antenna UE is located near to both.
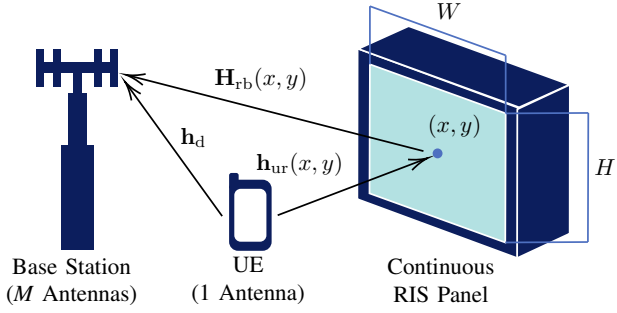
\begin{figure}[ht]
    \centering
    \resizebox{0.48\textwidth}{!}{
        \tikzset{every picture/.style={line width=0.75pt}} 

\begin{tikzpicture}[x=0.75pt,y=0.75pt,yscale=-1,xscale=1]

\draw  [color={rgb, 255:red, 255; green, 255; blue, 255 }  ,draw opacity=1 ][fill={rgb, 255:red, 12; green, 30; blue, 93 }  ,fill opacity=1 ] (213.56,28.88) -- (305.88,62.97) -- (305.73,149.77) -- (213.41,115.68) -- cycle ;
\draw  [color={rgb, 255:red, 255; green, 255; blue, 255 }  ,draw opacity=1 ][fill={rgb, 255:red, 179; green, 225; blue, 226 }  ,fill opacity=1 ] (220.64,37.95) -- (298.88,66.26) -- (298.65,140.7) -- (220.41,112.39) -- cycle ;
\draw  [color={rgb, 255:red, 255; green, 255; blue, 255 }  ,draw opacity=1 ][fill={rgb, 255:red, 12; green, 30; blue, 93 }  ,fill opacity=1 ] (242.96,15.95) -- (335.62,49.67) -- (306.64,62.76) -- (213.98,29.03) -- cycle ;
\draw  [color={rgb, 255:red, 255; green, 255; blue, 255 }  ,draw opacity=1 ][fill={rgb, 255:red, 12; green, 30; blue, 93 }  ,fill opacity=1 ] (305.59,63.88) -- (335.86,49.74) -- (335.76,135.57) -- (305.49,149.7) -- cycle ;
\draw  [draw opacity=0][fill={rgb, 255:red, 12; green, 30; blue, 93 }  ,fill opacity=1 ] (42.5,85) -- (61.5,85) -- (61.5,145) -- (42.5,145) -- cycle ;
\draw  [draw opacity=0][fill={rgb, 255:red, 12; green, 30; blue, 93 }  ,fill opacity=1 ] (47,54.5) -- (57.5,54.5) -- (57.5,104) -- (47,104) -- cycle ;
\draw  [draw opacity=0][fill={rgb, 255:red, 12; green, 30; blue, 93 }  ,fill opacity=1 ] (49.5,37.5) -- (54.5,37.5) -- (54.5,87) -- (49.5,87) -- cycle ;
\draw  [draw opacity=0][fill={rgb, 255:red, 12; green, 30; blue, 93 }  ,fill opacity=1 ] (28.5,37) -- (75.5,37) -- (75.5,43) -- (28.5,43) -- cycle ;
\draw  [draw opacity=0][fill={rgb, 255:red, 12; green, 30; blue, 93 }  ,fill opacity=1 ] (28.5,29.5) -- (33,29.5) -- (33,51) -- (28.5,51) -- cycle ;
\draw  [draw opacity=0][fill={rgb, 255:red, 12; green, 30; blue, 93 }  ,fill opacity=1 ] (38,29.5) -- (42.5,29.5) -- (42.5,51) -- (38,51) -- cycle ;
\draw  [draw opacity=0][fill={rgb, 255:red, 12; green, 30; blue, 93 }  ,fill opacity=1 ] (61.5,29.5) -- (66,29.5) -- (66,51) -- (61.5,51) -- cycle ;
\draw  [draw opacity=0][fill={rgb, 255:red, 12; green, 30; blue, 93 }  ,fill opacity=1 ] (71,29.5) -- (75.5,29.5) -- (75.5,51) -- (71,51) -- cycle ;
\draw  [color={rgb, 255:red, 12; green, 30; blue, 93 }  ,draw opacity=1 ][fill={rgb, 255:red, 12; green, 30; blue, 93 }  ,fill opacity=1 ] (140,109.7) .. controls (140,107.1) and (142.1,105) .. (144.7,105) -- (158.8,105) .. controls (161.4,105) and (163.5,107.1) .. (163.5,109.7) -- (163.5,140.3) .. controls (163.5,142.9) and (161.4,145) .. (158.8,145) -- (144.7,145) .. controls (142.1,145) and (140,142.9) .. (140,140.3) -- cycle ;
\draw  [draw opacity=0][fill={rgb, 255:red, 255; green, 255; blue, 255 }  ,fill opacity=1 ] (142.5,113.45) .. controls (142.5,111.41) and (144.16,109.75) .. (146.2,109.75) -- (157.3,109.75) .. controls (159.34,109.75) and (161,111.41) .. (161,113.45) -- (161,136.55) .. controls (161,138.59) and (159.34,140.25) .. (157.3,140.25) -- (146.2,140.25) .. controls (144.16,140.25) and (142.5,138.59) .. (142.5,136.55) -- cycle ;
\draw  [color={rgb, 255:red, 12; green, 30; blue, 93 }  ,draw opacity=1 ][fill={rgb, 255:red, 12; green, 30; blue, 93 }  ,fill opacity=1 ] (140,102) .. controls (140,101.5) and (140.4,101.1) .. (140.9,101.1) -- (143.6,101.1) .. controls (144.1,101.1) and (144.5,101.5) .. (144.5,102) -- (144.5,108.7) .. controls (144.5,109.2) and (144.1,109.6) .. (143.6,109.6) -- (140.9,109.6) .. controls (140.4,109.6) and (140,109.2) .. (140,108.7) -- cycle ;
\draw    (167.5,125) -- (252.13,92.71) ;
\draw [shift={(254,92)}, rotate = 159.12] [color={rgb, 255:red, 0; green, 0; blue, 0 }  ][line width=0.75]    (10.93,-3.29) .. controls (6.95,-1.4) and (3.31,-0.3) .. (0,0) .. controls (3.31,0.3) and (6.95,1.4) .. (10.93,3.29)   ;
\draw    (251.5,88) -- (81.94,44.5) ;
\draw [shift={(80,44)}, rotate = 14.39] [color={rgb, 255:red, 0; green, 0; blue, 0 }  ][line width=0.75]    (10.93,-3.29) .. controls (6.95,-1.4) and (3.31,-0.3) .. (0,0) .. controls (3.31,0.3) and (6.95,1.4) .. (10.93,3.29)   ;
\draw    (135.5,125.5) -- (80.67,49.62) ;
\draw [shift={(79.5,48)}, rotate = 54.15] [color={rgb, 255:red, 0; green, 0; blue, 0 }  ][line width=0.75]    (10.93,-3.29) .. controls (6.95,-1.4) and (3.31,-0.3) .. (0,0) .. controls (3.31,0.3) and (6.95,1.4) .. (10.93,3.29)   ;
\draw  [draw opacity=0][fill={rgb, 255:red, 78; green, 112; blue, 192 }  ,fill opacity=1 ] (256.8,89.32) .. controls (256.8,87.76) and (258.08,86.48) .. (259.64,86.48) .. controls (261.21,86.48) and (262.48,87.76) .. (262.48,89.32) .. controls (262.48,90.89) and (261.21,92.17) .. (259.64,92.17) .. controls (258.08,92.17) and (256.8,90.89) .. (256.8,89.32) -- cycle ;
\draw [color={rgb, 255:red, 78; green, 112; blue, 192 }  ,draw opacity=1 ]   (298.88,66.26) -- (346.5,66) ;
\draw [color={rgb, 255:red, 78; green, 112; blue, 192 }  ,draw opacity=1 ]   (298.65,140.7) -- (346.27,140.44) ;
\draw [color={rgb, 255:red, 78; green, 112; blue, 192 }  ,draw opacity=1 ]   (346.5,66) -- (346.27,140.44) ;
\draw [color={rgb, 255:red, 78; green, 112; blue, 192 }  ,draw opacity=1 ]   (298.88,32.76) -- (298.88,66.26) ;
\draw [color={rgb, 255:red, 78; green, 112; blue, 192 }  ,draw opacity=1 ]   (220.64,4.45) -- (298.88,32.76) ;
\draw [color={rgb, 255:red, 78; green, 112; blue, 192 }  ,draw opacity=1 ]   (220.64,4.45) -- (220.64,37.95) ;

\draw (5,148) node [anchor=north west][inner sep=0.75pt]   [align=left] {\begin{minipage}[lt]{63.39pt}\setlength\topsep{0pt}
\begin{center}
Base Station\\(\textit{M} Antennas)
\end{center}

\end{minipage}};
\draw (112,147.5) node [anchor=north west][inner sep=0.75pt]   [align=left] {\begin{minipage}[lt]{55.46pt}\setlength\topsep{0pt}
\begin{center}
UE\\(1 Antenna)
\end{center}

\end{minipage}};
\draw (222,148.5) node [anchor=north west][inner sep=0.75pt]   [align=left] {\begin{minipage}[lt]{57.16pt}\setlength\topsep{0pt}
\begin{center}
Continuous \\RIS Panel
\end{center}

\end{minipage}};
\draw (110,72.9) node [anchor=north west][inner sep=0.75pt]    {$\mathbf{h}_{\mathrm{d}}$};
\draw (129.5,38.4) node [anchor=north west][inner sep=0.75pt]    {$\mathbf{H}\mathrm{_{rb}}\mathnormal{( x,y)}$};
\draw (252.5,65.4) node [anchor=north west][inner sep=0.75pt]    {$\mathnormal{( x,y)}$};
\draw (257.5,1.4) node [anchor=north west][inner sep=0.75pt]    {$\mathnormal{W}$};
\draw (348.5,90.9) node [anchor=north west][inner sep=0.75pt]    {$H$};
\draw (153.5,87.4) node [anchor=north west][inner sep=0.75pt]    {$\mathbf{h}\mathrm{_{ur}}\mathnormal{( x,y)}$};

\end{tikzpicture}
    }
    \caption{System model showing the uplink channels.}
    \label{fig:LCR_Paper_System_Diagram}
\end{figure}

Let $\mathbf{h}_{\mathrm{d}} \in \mathbb{C}^{M\times 1}$ be the channel vector from the UE to BS. We assume spatially correlated Rayleigh fading so that $\mathbf{h}_{\mathrm{d}} \sim \mathcal{CN}(\mathbf{0},\beta_{\mathrm{d}}\mathbf{R}_{\mathrm{d}})$, where $\beta_{\mathrm{d}}$ is the channel gain and $\mathbf{R}_{\mathrm{d}}$ is the correlation matrix. For the RIS links, the channel is defined in terms of an arbitrary point, $(x,y)$, on the rectangular surface, such that $0 \le x \le W$ and $0 \le y \le H$. The RIS-BS link is assumed to be LoS and is defined by the vector $\mathbf{H}_{\mathrm{rb}}(x,y) = \sqrt{\beta_{\mathrm{rb}}}\mathbf{a}_\mathrm{b}\mathbf{a}_\mathrm{r}^*(x,y)$, where  $\mathbf{a}_\mathrm{b}$ is the steering vector for the LoS ray at the BS and $\beta_{\mathrm{rb}}$ is the channel gain. While the BS steering vector is an $M \times 1$ vector due to the $M$ BS antennas, the use of a continuous RIS makes the phase shift at the surface a function of position, so the steering function $\mathbf{a}_\mathrm{r}(x,y) \in \mathbb{C}$ is required. Note that the LoS assumption is a common RIS scenario as motivated in \cite{nadeem_asymptotic_2020}.

For the UE-RIS channel, we assume spatially correlated Rayleigh fading, so that at two points on the surface, $\mathbf{h}_{\mathrm{ur}}(x,y)\in \mathbb{C}$ and $\mathbf{h}_{\mathrm{ur}}(x',y')\in \mathbb{C}$ are both $\mathcal{CN}(0,\beta_{\mathrm{ur}})$ with correlation $\rho(x,y,x',y')=\mathbb{E}[\mathbf{h}_{\mathrm{ur}}(x,y)\mathbf{h}^*_{\mathrm{ur}}(x',y')]/\beta_{\mathrm{ur}}$, where $\beta_{\mathrm{ur}}$ is the channel gain. Finally, at the RIS, let $\mathbf{\Phi}(x,y)$ be the reflection coefficient at the point $(x,y)$. 

The received signal at the BS is therefore
\begin{equation}
    \mathbf{r}{=}\!\!\left(\!\mathbf{h}_{\mathrm{d}}\!+\!\!\!\int\limits_{x=0}^W\int\limits_{y=0}^H\mathbf{H}_{\mathrm{rb}}(x,y)\mathbf{\Phi}(x,y) \mathbf{h}_{\mathrm{ur}}(x,y)dydx\!\right)\!s\!+\!\mathbf{n}, \label{eq:channel}
\end{equation}
where $\mathbf{n} \sim \mathcal{CN}(\mathbf{0},\sigma^2\mathbf{I}_M)$ is additive white Gaussian noise while $E_s=\mathbb{E}[|s|^2]$ is the transmitted energy of symbol $s$.

\subsection{Optimal RIS design}
The RIS coefficients can be expressed as
\begin{equation}
	\label{eq:Phi}
	\mathbf{\Phi}(x,y) = \mathbf{a}_\mathrm{r}(x,y)e^{j\boldsymbol{\psi}(x,y)}e^{-j\angle \mathbf{h}_{\mathrm{ur}}(x,y)},
\end{equation}
where $\boldsymbol{\psi}(x,y)$ is an arbitrary phase angle. Substituting \eqref{eq:Phi} and the definition of $\mathbf{H}_{\mathrm{rb}}$ into  \eqref{eq:channel} and simplifying gives 
\begin{equation}
    \mathbf{r} {=}\!\!\left(\!\mathbf{h}_{\mathrm{d}}{+}\sqrt{\beta_{\mathrm{rb}}}\mathbf{a}_\mathrm{b}\!\!\int\limits_{x=0}^W\int\limits_{y=0}^H|\mathbf{h}_{\mathrm{ur}}(x,y)|e^{j\boldsymbol{\psi}(x,y)} dydx\!\right)\!s\!+\!\mathbf{n}. \label{eq:channel2}
\end{equation}
Defining the double integral in \eqref{eq:channel2} as $ce^{j\nu}$, where $c>0$, gives the simplified result:
\begin{equation}
    \mathbf{r} = \left(\mathbf{h}_{\mathrm{d}}+\sqrt{\beta_{\mathrm{rb}}}\mathbf{a}_\mathrm{b}ce^{j\nu}\right)s + \mathbf{n} =  \mathbf{h}s+\mathbf{n}, \label{eq:channel3}
\end{equation}
from which the SNR-optimal RIS design can be derived in the following Proposition.
\begin{prop}\label{Thm1}
    The optimal RIS design  is given by
    \begin{equation}
	\label{eq:OptPhi}
	\mathbf{\Phi}(x,y) = \omega \mathbf{a}_\mathrm{r}(x,y)e^{-j\angle \mathbf{h}_{\mathrm{ur}}(x,y)},
 \end{equation}
 where $\omega=\tfrac{\mathbf{a}_\mathrm{b}^\dagger\mathbf{h}_{\mathrm{d}}}{|\mathbf{a}_\mathrm{b}^\dagger\mathbf{h}_{\mathrm{d}}|}$ and the corresponding maximum SNR is
 \begin{equation}\label{eq:OptSNR}
 \mathrm{SNR}=\frac{E_s}{\sigma^2}\left(\mathbf{h}_\mathrm{d}^H \mathbf{h}_\mathrm{d}+M\beta_\mathrm{rb}Y^2+2\sqrt{\beta_{\mathrm{rb}}}Y|\mathbf{a}_\mathrm{b}^\dagger\mathbf{h}_{\mathrm{d}}|\right),
  \end{equation}
  where $Y=\int_{x=0}^W\int_{y=0}^H|\mathbf{h}_{\mathrm{ur}}(x,y)| dydx$.
\end{prop}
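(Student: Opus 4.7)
The plan is to start from the simplified model \eqref{eq:channel3}, apply matched-filter (MRC) combining at the BS, and then optimize over the only remaining design freedom, the phase function $\boldsymbol{\psi}(x,y)$, which enters the effective channel $\mathbf{h} = \mathbf{h}_\mathrm{d}+\sqrt{\beta_\mathrm{rb}}\mathbf{a}_\mathrm{b}ce^{j\nu}$ only through the scalar pair $(c,\nu)$. Since transmission is single-stream, MRC is SNR-optimal and yields $\mathrm{SNR} = (E_s/\sigma^2)\,\mathbf{h}^\dagger\mathbf{h}$. Expanding this inner product and using $\mathbf{a}_\mathrm{b}^\dagger\mathbf{a}_\mathrm{b} = M$ (each steering-vector entry has unit modulus) gives
\[
\mathrm{SNR}=\tfrac{E_s}{\sigma^2}\!\left(\mathbf{h}_\mathrm{d}^\dagger\mathbf{h}_\mathrm{d} + M\beta_\mathrm{rb}c^2 + 2\sqrt{\beta_\mathrm{rb}}\,c\,\mathrm{Re}\!\left[e^{-j\nu}\mathbf{a}_\mathrm{b}^\dagger\mathbf{h}_\mathrm{d}\right]\right).
\]
The first term is independent of $\boldsymbol{\psi}(x,y)$, so the proof reduces to jointly maximizing $c$ and the cross-term over the choice of $\boldsymbol{\psi}(x,y)$.

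Next, I would bound $c$ via the triangle inequality applied to the defining integral $ce^{j\nu}=\int_0^W\!\int_0^H|\mathbf{h}_\mathrm{ur}(x,y)|e^{j\boldsymbol{\psi}(x,y)}dy\,dx$, obtaining $c\le Y$ with equality iff $e^{j\boldsymbol{\psi}(x,y)}$ is constant in $(x,y)$, i.e.\ $\boldsymbol{\psi}(x,y)\equiv\psi_0$ for some constant $\psi_0$. The essential observation is that this bound is attainable \emph{uniformly} in the resulting angle $\nu=\psi_0$, so the maximizations over $c$ and over $\nu$ decouple: I first set $c=Y$ by enforcing a constant $\boldsymbol{\psi}$, and then optimize the constant $\psi_0$ freely. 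Applying $\mathrm{Re}[e^{-j\psi_0}\mathbf{a}_\mathrm{b}^\dagger\mathbf{h}_\mathrm{d}]\le|\mathbf{a}_\mathrm{b}^\dagger\mathbf{h}_\mathrm{d}|$, the unique (modulo $2\pi$) maximizer is $\psi_0=\angle(\mathbf{a}_\mathrm{b}^\dagger\mathbf{h}_\mathrm{d})$, so that $e^{j\psi_0}=\omega$. Substituting $\boldsymbol{\psi}(x,y)\equiv\angle(\mathbf{a}_\mathrm{b}^\dagger\mathbf{h}_\mathrm{d})$ into \eqref{eq:Phi} recovers \eqref{eq:OptPhi}, and inserting $c=Y$ together with the aligned cross-term into the SNR expression above yields \eqref{eq:OptSNR}.

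I do not expect a serious obstacle: the hard work was done in \eqref{eq:channel2}, where the structure of $\boldsymbol{\Phi}(x,y)$ postulated in \eqref{eq:Phi} already cancelled the random phase of $\mathbf{h}_\mathrm{ur}$ at every point and stripped the BS steering vector out of the integral. What remains is a standard two-step phase-alignment argument. The only subtlety worth stating carefully is the decoupling of $c$ and $\nu$: because the triangle-inequality bound on $|ce^{j\nu}|$ is tight for \emph{every} target phase (simply by choosing that phase as the common value of $\boldsymbol{\psi}$), one may treat the inner maximization over $c$ and the outer alignment of $\nu$ independently, which is what makes the global optimum the product of the two separate optima.
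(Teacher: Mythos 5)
Your proposal is correct and follows essentially the same route as the paper's Appendix A: take the effective SIMO channel from \eqref{eq:channel3}, note the SNR is $\tfrac{E_s}{\sigma^2}\|\mathbf{h}_\mathrm{d}+\sqrt{\beta_\mathrm{rb}}\mathbf{a}_\mathrm{b}ce^{j\nu}\|^2$, bound $c\le Y$ by the triangle inequality with equality for constant $\boldsymbol{\psi}$, and then align that constant with $\angle(\mathbf{a}_\mathrm{b}^\dagger\mathbf{h}_\mathrm{d})$. Your explicit justification of the $c$--$\nu$ decoupling (the bound $c=Y$ is attainable for every target phase) is a slightly more careful statement of what the paper asserts in one line, but it is not a different argument.
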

\begin{proof}
    See Appendix A for the derivation of \eqref{eq:OptPhi} and \eqref{eq:OptSNR}.
\end{proof}

\section{Analysis}
\subsection{Optimal SNR: Exact Mean and Approximate Variance}\label{sec:meanandvar}
From \eqref{eq:OptSNR} and using the results $\mathbb{E}[\mathbf{h}_\mathrm{d}^\dagger\mathbf{h}_\mathrm{d}]=M\beta_\mathrm{d}$ and $\mathbb{E}[|\mathbf{a}_\mathrm{b}^\dagger\mathbf{h}_\mathrm{d}|]=\tfrac{1}{2}\sqrt{\pi\beta_\mathrm{d}\mathbf{a}_\mathrm{b}\mathbf{R}_\mathrm{d}\mathbf{a}_\mathrm{b}}$ from \cite{singh_optimal_2022}, $\mu_1=\mathbb{E}[\mathrm{SNR}]$ is
    \begin{align}
	\label{eq:MeanSNR1}
	\mu_1&\!=\!\frac{E_s}{\sigma^2}\!\left(\!\mathbb{E}\!\left[\mathbf{h}_\mathrm{d}^\dagger\mathbf{h}_\mathrm{d}\right]\!+\!M\beta_\mathrm{rb}\mathbb{E}\!\left[Y^2\right]\!+\!2\sqrt{\beta_\mathrm{rb}}\mathbb{E}[Y]\mathbb{E}\!\left[|\mathbf{a}_\mathrm{b}^\dagger\mathbf{h}_\mathrm{d}|\right]\right)\!,\notag\\
    &\!=\!\frac{E_s}{\sigma^2}\Big( \!M\beta_\mathrm{d}+M\beta_\mathrm{rb}m_2+m_1\sqrt{\pi\beta_\mathrm{rb}\beta_\mathrm{d} \mathbf{a}_\mathrm{b}^H\mathbf{R}_\mathrm{d}\mathbf{a}_\mathrm{b}} \Big),
 \end{align}
where $m_k=\mathbb{E}[Y^k]$. 
\begin{lem}
As $|\mathbf{h}_{\mathrm{ur}}(x,y)|$ is a correlated Rayleigh process over $0 \le x \le W$, $0 \le y \le H$, we have that
    \begin{align}
  &\mathbb{E}[Y]=\tfrac{1}{2}\sqrt{\pi \beta_\mathrm{ur}}HW,\label{EY}\\
  &\mathbb{E}[Y^2]=\!\!\!\!\int\limits_{x'=0}^W \int\limits_{y'=0}^H\int\limits_{x=0}^W\int\limits_{y=0}^H\frac{\pi\beta_\mathrm{ur}}{4} \notag \\ &\qquad\quad\times{}_2F_1\!\!\left(\!\frac{-1}{2},\!\frac{-1}{2};\!1;\!|\rho(x,y,x'\!,y')|^2\!\!\right)\!dydxdy'dx'.\label{EY2}
 \end{align}
\end{lem}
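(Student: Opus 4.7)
The plan is to apply Fubini's theorem in both cases to interchange expectation with the spatial integrals, and then evaluate the pointwise moments of the envelope process $|\mathbf{h}_{\mathrm{ur}}(\cdot,\cdot)|$.

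For \eqref{EY}, pointwise each $\mathbf{h}_{\mathrm{ur}}(x,y) \sim \mathcal{CN}(0,\beta_{\mathrm{ur}})$, so $|\mathbf{h}_{\mathrm{ur}}(x,y)|$ is Rayleigh distributed with scale $\sqrt{\beta_{\mathrm{ur}}/2}$ and mean $\tfrac{1}{2}\sqrt{\pi\beta_{\mathrm{ur}}}$, independent of $(x,y)$. Integrating this constant over the rectangle $[0,W]\times[0,H]$ immediately yields the stated result.

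For \eqref{EY2}, writing $Y^2$ as a product of the double integral with itself in unprimed and primed coordinates and swapping expectation with integration gives
\begin{equation*}
\mathbb{E}[Y^2] = \int_0^W\!\int_0^H\!\int_0^W\!\int_0^H \mathbb{E}\bigl[|\mathbf{h}_{\mathrm{ur}}(x,y)|\,|\mathbf{h}_{\mathrm{ur}}(x',y')|\bigr]\,dy\,dx\,dy'\,dx'.
\end{equation*}
The kernel to evaluate is therefore the joint first-order moment of two envelopes. Since $(\mathbf{h}_{\mathrm{ur}}(x,y),\mathbf{h}_{\mathrm{ur}}(x',y'))$ is jointly zero-mean complex Gaussian with common variance $\beta_{\mathrm{ur}}$ and complex correlation $\rho(x,y,x',y')$, the pair of envelopes follows the classical correlated Rayleigh joint density, which depends on the correlation only through $|\rho|^2$ and features a modified Bessel kernel $I_0(\cdot)$.

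The main obstacle is evaluating this product moment in closed form. My plan is to substitute the joint PDF into $\mathbb{E}[R_1 R_2]$, expand $I_0$ as its Maclaurin series $\sum_k (z/2)^{2k}/(k!)^2$, and integrate term by term; each radial integral collapses via the substitution $u=r^2$ to a Gamma function of half-integer order, and collecting the series produces a ${}_2F_1(\tfrac{3}{2},\tfrac{3}{2};1;|\rho|^2)$ multiplied by $(1-|\rho|^2)^2$. A single application of Euler's transformation ${}_2F_1(a,b;c;z)=(1-z)^{c-a-b}{}_2F_1(c-a,c-b;c;z)$ then converts this to $\tfrac{\pi\beta_{\mathrm{ur}}}{4}\,{}_2F_1(-\tfrac{1}{2},-\tfrac{1}{2};1;|\rho|^2)$. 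This product moment is a classical identity for correlated Rayleigh variables and could alternatively be invoked by citation. Substituting this kernel under the quadruple integral yields \eqref{EY2}.
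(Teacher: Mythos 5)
Your proposal is correct and follows essentially the same route as the paper: interchange expectation with the spatial integrals (Fubini/Tonelli, justified since the integrands are nonnegative) and then insert the first and cross moments of the correlated Rayleigh envelope, with the cross moment $\mathbb{E}[R_1R_2]=\tfrac{\pi\beta_\mathrm{ur}}{4}\,{}_2F_1\!\left(-\tfrac{1}{2},-\tfrac{1}{2};1;|\rho|^2\right)$ being the key identity. The only difference is that the paper simply cites this identity (eq. (4.2) of the Miller reference), whereas you additionally sketch its derivation via the bivariate Rayleigh density, the $I_0$ series, and Euler's transformation, and that sketch is accurate.
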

\begin{proof}
    See Appendix B for the derivation of \eqref{EY} and \eqref{EY2}.
\end{proof}
 The integral in \eqref{EY2} is cumbersome to compute, but in practice most correlation models admit considerable simplification.
 \begin{lem}
     When correlation is isotropic, such that $\rho(x,y,x',y')$ is solely a function of the spatial separation, $r=\sqrt{(x-x')^2\!+\!(y-y')^2}$,
    \begin{align}
        & \mathbb{E}[Y^2] = 4\Biggl\{ \int_{0}^{H}\!\!rg(r)\bigg[WH\frac{\pi}{2}-(W\!\!+\!\!H)r\!+\!\frac{r^2}{2}\bigg] dr \notag \\
        & + \!\!\int_{H}^{W}\!\!\!\!rg(r)\bigg[\!W\!H\!\sin^{\!-1}\!\!\bigg(\!\frac{H}{r}\!\bigg)\!+\!W\!\sqrt{r^2\!\!-\!\!H^2}\!-Wr-\!\frac{H^2}{2}\bigg] dr \notag \\
        & -\!\!\int_{W}^{\sqrt{W^2+H^2}}rg(r)\bigg[WH\bigg(\!\cos^{-1}\!\bigg(\!\frac{W}{r}\bigg)\!-\sin^{-1}\!\bigg(\!\frac{H}{r}\bigg)\!\bigg)\! \notag \\
        &+ \frac{W^2+H^2+r^2}{2} - W \sqrt{r^2\!\!-\!\!H^2}- H \sqrt{r^2\!\!-\!\!W^2}\bigg]\!dr\!   \Biggl\}\!,\!\! \label{eq:EY2simple}
    \end{align}
    where $g(r)=\frac{\pi\beta_\mathrm{ur}}{4} {}_2F_1\left(-\frac{1}{2},-\frac{1}{2};1;|\rho(x,y,x',y')|^2\right).$
\end{lem}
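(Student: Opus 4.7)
The plan is to reduce the quadruple integral in \eqref{EY2} to a one-dimensional radial integral by exploiting the fact that the integrand depends only on $r=\sqrt{(x-x')^2+(y-y')^2}$. First, I would change variables to $u=x-x'$, $v=y-y'$. By the standard Fubini computation, the induced measure on $(u,v)\in[-W,W]\times[-H,H]$ has density $(W-|u|)(H-|v|)$. Since $g(r)$ is even in $u$ and $v$, the integral collapses to
\begin{equation*}
\mathbb{E}[Y^2]=4\int_{0}^{W}\!\!\int_{0}^{H}(W-u)(H-v)\,g\!\left(\sqrt{u^2+v^2}\right)dv\,du.
\end{equation*}

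Next I would switch to polar coordinates $u=r\cos\theta$, $v=r\sin\theta$, $du\,dv=r\,dr\,d\theta$, obtaining a product of $r\,g(r)$ and the angular factor $(W-r\cos\theta)(H-r\sin\theta)$. The nontrivial part is describing the image of the rectangle $[0,W]\times[0,H]$ in the $(r,\theta)$ plane (assuming, as the lemma tacitly does, that $W\ge H$). This splits naturally into three regimes: for $0\le r\le H$ the whole quarter-circle lies inside the rectangle so $\theta\in[0,\pi/2]$; for $H\le r\le W$ the constraint $v\le H$ becomes active, giving $\theta\in[0,\sin^{-1}(H/r)]$; for $W\le r\le\sqrt{W^2+H^2}$ both $u\le W$ and $v\le H$ are active, yielding $\theta\in[\cos^{-1}(W/r),\sin^{-1}(H/r)]$.

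With the regions identified, I would expand $(W-r\cos\theta)(H-r\sin\theta)=WH-Wr\sin\theta-Hr\cos\theta+r^2\sin\theta\cos\theta$ and integrate term by term using the antiderivatives $\theta$, $-\cos\theta$, $\sin\theta$, $\tfrac12\sin^2\theta$. Evaluating at the endpoints requires only the identities $\sin(\sin^{-1}(H/r))=H/r$, $\cos(\sin^{-1}(H/r))=\sqrt{r^2-H^2}/r$, $\cos(\cos^{-1}(W/r))=W/r$, and $\sin(\cos^{-1}(W/r))=\sqrt{r^2-W^2}/r$. Carrying out this bookkeeping in each of the three $r$-regions produces precisely the three bracketed polynomials in $r$, $\sin^{-1}(H/r)$, $\cos^{-1}(W/r)$, $\sqrt{r^2-H^2}$, and $\sqrt{r^2-W^2}$ appearing in \eqref{eq:EY2simple}, multiplied by the common $r\,g(r)$ factor.

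The main obstacle is the third region: both angular endpoints contribute, and the constant and $r^2$ terms arising from $\tfrac12\sin^2\theta$ combine with the square-root terms in a way that requires careful cancellation to reach the symmetric form $-(W^2+H^2+r^2)/2+W\sqrt{r^2-H^2}+H\sqrt{r^2-W^2}$. Beyond that, the argument is essentially geometric area-slicing with standard trigonometric substitutions, and the prefactor $4$ from the four-quadrant symmetry is preserved throughout.
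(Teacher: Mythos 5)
Your proposal is correct — I checked the angular integrals in all three radial regimes and they reproduce the three bracketed expressions in \eqref{eq:EY2simple}, including the sign flip absorbed by the leading minus on the third integral — but it follows a different route from the paper. The paper's Appendix C first rewrites the quadruple integral as $\mathbb{E}[Y^2]=W^2H^2\,\mathbb{E}[g(r)]$, interpreting the expectation as being over two independent uniform points on the rectangle, and then simply cites the known probability density $f_s(r)$ of the distance between two such points from \cite{mathai_random_1998}, so that $\mathbb{E}[Y^2]=W^2H^2\int_0^{\sqrt{W^2+H^2}}g(r)f_s(r)\,dr$ and substitution of $f_s(r)$ gives the lemma. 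You instead derive that radial density from scratch: the change of variables $u=x-x'$, $v=y-y'$ with the triangular kernel $(W-|u|)(H-|v|)$, the reduction by symmetry to the first quadrant with prefactor $4$, and the polar-coordinate split into the ranges $r<H$, $H\le r<W$, $W\le r\le\sqrt{W^2+H^2}$ with endpoints $\sin^{-1}(H/r)$ and $\cos^{-1}(W/r)$ is precisely a proof of the rectangle line-picking distribution that the paper imports as a black box. The trade-off is clear: the paper's argument is two lines plus a citation, while yours is self-contained and makes transparent where each term in the brackets comes from (and why the $H\le W$ assumption enters, namely in ordering the angular constraints); both yield identical expressions, since your bracket functions equal $\tfrac{W^2H^2}{4r}f_s(r)$.
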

\begin{proof}
    See Appendix C for the derivation of (\ref{eq:EY2simple}).
\end{proof}
Hence, for isotropic correlation models, the mean and variance of $Y$ can be obtained from the closed-form result in \eqref{EY} and the use of \eqref{eq:EY2simple} which requires only single numerical integrals. Note that (\ref{eq:EY2simple}) assumes that $H\leq W$, but the formula is general such that $W$ and $H$ can be switched if $W<H$ in the isotropic case.
\subsection{A Simple Bound on the Mean SE}\label{sec:bound}
Jensen's inequality gives the simple bound on the mean SE,
\begin{align}
  \mathbb{E}[\textrm{SE}]&=  \mathbb{E}[\textrm{log}_2(1+\textrm{SNR})], \label{eq:SE} \\ 
 & \le \textrm{log}_2\left(1+\mu_1\right). \label{eq:SEbound}
\end{align}

\subsection{Optimal SNR: Approximate Outage Results}\label{sec:outage}

We believe an exact expression for the SNR distribution is intractable. Therefore, we propose a gamma approximation to the SNR distribution using an approximation technique from \cite{singh_optimal_2022}. To fit the gamma distribution using the method of moments, the mean and variance are required. The mean is given in \eqref{eq:MeanSNR1}, so it suffices to compute the second moment, $\mu_2=\mathbb{E}[\textrm{SNR}^2]$. Squaring \eqref{eq:OptSNR}, we obtain
\begin{align}
  &\mu_2\!=\!\frac{E_s^2}{\sigma^4}\mathbb{E}\Big[(\mathbf{h}_\mathrm{d}^H \mathbf{h}_\mathrm{d})^2\!\!+\!2M\beta_\mathrm{rb}Y^2\mathbf{h}_\mathrm{d}^H \mathbf{h}_\mathrm{d}\!+\!4\sqrt{\!\beta_\mathrm{rb}}Y\mathbf{h}_\mathrm{d}^H \mathbf{h}_\mathrm{d}|\mathbf{a}_\mathrm{b}^\dagger\notag\\
  & \!\times\!\mathbf{h}_{\mathrm{d}}|\!+\!\!M^2\beta_\mathrm{rb}^2Y^4\!\!+\!4M\beta_\mathrm{rb}^{\frac{3}{2}}Y^3 |\mathbf{a}_\mathrm{b}^\dagger\mathbf{h}_{\mathrm{d}}|\!+\!4\beta_\mathrm{rb}Y^2|\mathbf{a}_\mathrm{b}^\dagger\mathbf{h}_{\mathrm{d}}|^2\Big]\!.\!\label{mu2a}
\end{align}
The expectation of the variables in \eqref{mu2a} involving $\mathbf{h}_{\mathrm{d}}$ are found in \cite{singh_optimal_2021}. Using these results and the independence of $Y$ and $\mathbf{h}_{\mathrm{d}}$ gives
\begin{align}
  &\mu_2=\frac{E_s^2}{\sigma^4}\bigg[\beta_\mathrm{d}^2(\mathrm{tr}(\mathbf{R}_\mathrm{d}^2)\!+\!\mathrm{tr}(\mathbf{R}_\mathrm{d})^2)\!+\!2M^2\beta_\mathrm{d}\beta_\mathrm{rb}m_2\!+\!m_1\notag\\
  &\times\!\sqrt{\pi\beta_\mathrm{rb} \beta_\mathrm{d}^3 \mathbf{a}_\mathrm{b}^H\mathbf{R}_\mathrm{d}\mathbf{a}_\mathrm{b}}\left(2M\!+\!\frac{\mathbf{a}_\mathrm{b}^H\mathbf{R}_\mathrm{d}^2\mathbf{a}_\mathrm{b}}{\mathbf{a}_\mathrm{b}^H\mathbf{R}_\mathrm{d}\mathbf{a}_\mathrm{b}}\right)\!+\!M^2\beta_\mathrm{rb}^2m_4\notag\\
    &+2Mm_3\sqrt{\pi\beta_\mathrm{d}\beta_\mathrm{rb}^3\mathbf{a}_\mathrm{b}^H\mathbf{R}_\mathrm{d}\mathbf{a}_\mathrm{b}}+4\beta_\mathrm{d}\beta_\mathrm{rb}m_2\mathbf{a}_\mathrm{b}^H\mathbf{R}_\mathrm{d}\mathbf{a}_\mathrm{b}\bigg].\label{mu2}
  \end{align}
From \eqref{mu2}, only $m_k$ for $k=1,2,3,4$ are required to compute $\mu_2$. The first two moments, $m_1$ and $m_2$, are given in \eqref{EY} and \eqref{EY2} (or \eqref{eq:EY2simple} for the isotropic case). Hence, $m_3$ and $m_4$ are required. While expressions for these moments can be given in terms of multiple integrals, closed-form expressions for higher-order correlations of Rayleigh variables are not available. Hence, we apply a further approximation, modelling $Y$ as a gamma variable. With this approach, we can express $m_3$ and $m_4$ in terms of $m_1$ and $m_2$. Recursively defining the third and fourth moments leads to
\begin{align}
    m_3&=\left(2m_2-m_1^2\right)\frac{m_2}{m_1},\label{m3}\\
    m_4&=\left(3m_2-2m_1^2\right)\left(2m_2-m_1^2\right)\frac{m_2}{m_1^2}.\label{m4}
\end{align}
Using \eqref{m3} and \eqref{m4} we have all the terms in $\mu_1$ and $\mu_2$ and the resulting gamma approximation is $\mathrm{SNR} \sim \mathcal{G}(\alpha_g,\beta_g)$, where $\alpha_g={\mu_1^2}/(\mu_2-\mu_1^2)$, $\beta_g={\mu_1}/(\mu_2-\mu_1^2)$. Hence, the outage approximation is
\begin{equation}
    \mathbb{P}(\mathrm{SNR} \le x) \approx \frac{\gamma(\alpha_g,\beta_g x)}{\Gamma(\alpha_g)}.
\end{equation}

\subsection{Channel Hardening: Approximate Coefficient of Variation}
Channel hardening is a phenomenon where a channel behaves almost deterministically due to spatial diversity \cite{bjornson_massive_2017}. The CV is a known metric to quantify the channel hardening effect, and can be approximated using the gamma fit proposed in Section \ref{sec:outage} so that
\begin{equation} \label{cv_metric}
    \mathrm{CV}^2  = \frac{\mathrm{Var}[||\mathbf{h}||^2]}{\mathbb{E}[\mathrm{||\mathbf{h}||^2}]^2} \approx \frac{\mu_2-\mu_1^2}{\mu_1^2}.
\end{equation}

\section{Numerical Results}\label{sec:numresults}
This section verifies the analytical results and explores the system behaviour. In Figs. \ref{fig:meanSNR}-\ref{fig:channelhardening}, we consider the mean SNR, the mean SE, the outage probability and the channel hardening, respectively. Unless stated otherwise, $M=32$ and the carrier frequency is $5.8$ GHz.  For simulation purposes, we consider two channel correlation models. We use a scaled version of Rayleigh fading sinc correlation,
$\rho_{m,n} = \mathrm{sinc} \left(2\, \kappa \, d_{m,n} \right)$,
and a scaled version of classical Jakes' correlation,
$\rho_{m,n}=J_0(2\pi\kappa\,d_{m,n})$, 
where $\rho_{m,n}$ is the correlation between the antennas/points $m$ and $n$, $d_{m,n}$ is the Euclidean distance between the antennas/points $m$ and $n$ and $\kappa$ is an arbitrary scaling parameter for the investigation of varying degrees of correlation. If not stated differently, $\kappa=1$. The channel gain values were selected according to the distance-based path loss model in \cite{wu_intelligent_2019}, where $\beta =  C_0\left({d}/{D_0}\right)^{-\alpha}$, $D_0$ is the reference distance of 1 m, $C_0$ is the path loss at $D_0$ (-30 dB), $d$ is the link distance in metres and $\alpha$ is the path loss exponent. Assuming an indoor transmission environment with an NLoS direct link, path loss exponents were selected to be $\alpha_{\mathrm{d}} = 6$, $\alpha_{\mathrm{rb}} = 1.7$ and $\alpha_{\mathrm{ur}} = 1.7$ \cite{rappaport_wireless_2001}. Setting $d_\mathrm{rb}$, $d_x$ and $d_y$ (the distance between the BS and RIS, and BS and UE in the $x$ and $y$ directions respectively) in \cite[Fig. 2]{wu_intelligent_2019} gives the link distances.\footnote{Note that in \cite{wu_intelligent_2019}, $d_\mathrm{rb},d_x$ and $d_y$ are denoted $d_0,d$ and $d_v$, respectively.} Unless otherwise specified, $d_\mathrm{rb}=5$ m, $d_x = 30$ and $d_y = 1$ m. 

The steering vector $\mathbf{a}_\mathrm{b}$ corresponds to the vertical uniform rectangular array (VURA) model in  \cite{miller_analytical_2019}. Elements are arranged at intervals of $d_b$ wavelengths at the BS. $M_x$ and $M_z$ are the number of BS elements per row and column respectively, such that $M = M_xM_z$. In these results, we assume $M_x=8$ and $M_z = 4$. $\theta_\mathrm{A}$ and $\phi_\mathrm{A}$ are the corresponding elevation and azimuth angles of arrival (AoAs) at the BS. We assume the RIS is on a $\frac{\pi}{4}$ rad angle with respect to the BS, so $\phi_\mathrm{A} = \frac{\pi}{4}$ rad. We also assume both are at the same height, so $\theta_\mathrm{A}=\frac{\pi}{2}$ rad. For all simulations, $10^5$ replicates were generated.

\subsection{Mean SNR}
\label{sec:meanSNRres}

Figure \ref{fig:meanSNR} verifies (\ref{eq:MeanSNR1}) and investigates the mean SNR performance for a range of RIS surface areas using both the sinc (denoted S) and Jakes' (denoted J) correlation models.

\begin{figure}[ht]
    \centering
    \includegraphics[scale=0.75,trim={0.2cm 0.15cm 0.2 0.2cm},clip]{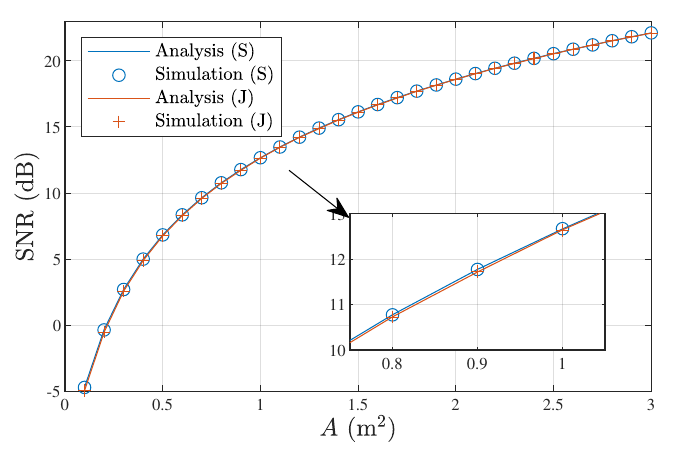}
    \caption{Comparison of the simulated mean SNR and the analytical mean SNR in (\ref{eq:MeanSNR1}) for both the sinc and Jakes' correlation models and a varied RIS area.}
    \label{fig:meanSNR}
\end{figure}

Figure \ref{fig:meanSNR} shows that the mean SNR increases as the RIS surface area increases. A larger RIS surface area can collect more signal energy, improving the SNR. The linear SNR grows approximately quadratically with the area. As shown in (\ref{eq:EY2gf}), $m_2$ in (\ref{eq:OptSNR}) is scaled by $A^2$, and this term dominates the total SNR as the surface area increases. Also, the sinc and Jakes' correlation models result in very similar mean SNR performance.

\subsection{Mean Spectral Efficiency}
Figure \ref{fig:meanSE} investigates the performance of the spectral efficiency bound (SEB) presented in Section \ref{sec:bound}. The Jakes' correlation model is used, and the SE bound is compared to the simulated SE performance for different $\kappa$ values and RIS areas. The arbitrary correlation parameter $\kappa$ is varied from $0$ (giving perfect correlation) to $1$ (giving the typical Jakes' correlation). When $\kappa < 1$, the initial rate of decay and frequency of oscillation of the zero-th order Bessel function is reduced. This increases the correlation between nearby pairs of points on the RIS and allows us to investigate the effects of increased correlation on the SE. Results are plotted for surface areas of $A = W\times H = 0.1, 0.2, 0.3$ and $0.4$ m$^2$, where $W=H=\sqrt{A}$.

\begin{figure}[ht]
    \centering
    \includegraphics[scale=0.75,trim={0.2cm 0.15cm 0.2 0.2cm},clip]{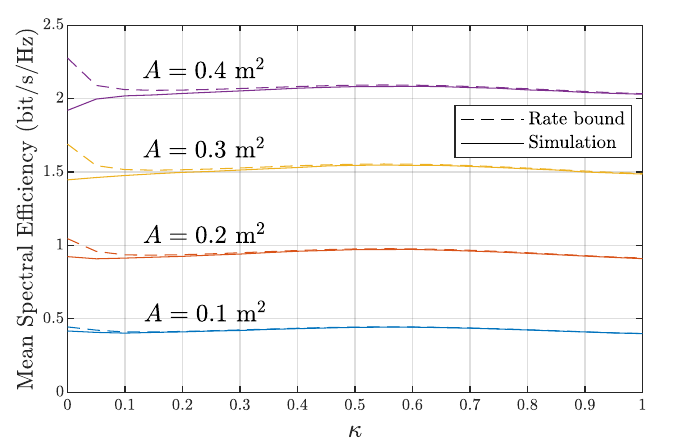}
    \caption{Comparison of the simulated rate and the rate bound provided by Jensen's inequality in Section \ref{sec:bound} for a range of correlations and RIS areas.}
    \label{fig:meanSE}
\end{figure}

While generally the overall correlation decreases as $\kappa$ grows, the SE itself exhibits minor fluctuations. This fluctuation is due to artefacts in the zero-th order Bessel function comprising the Jakes correlation model, which leads to the correlation between points on the surface not decreasing monotonically with distance. 

The SEB accurately approximates the true SE when $\kappa > 0.1$. At $\kappa = 0$, the error is $6 - 16\%$, but at $\kappa = 1$, the error is less than $0.5\%$. The rate bound is most accurate for larger $\kappa$ values, as they result in lower correlations. The correlation models proposed for RIS systems are inherently low, so the SEB will perform well in these scenarios. Improved bound performance at lower correlations agrees with existing literature \cite{zhang_power_2014, inwood_phase_2023}. Applying a Taylor series to (\ref{eq:SE}) leads to
\begin{equation}
    \mathbb{E}[\mathrm{SE}]\!=\!\frac{1}{\mathrm{ln}(2)}\!\bigg[\mathrm{ln}(1+\mathbb{E}[\mathrm{SNR}]) + \frac{\mathrm{Var[SNR]}}{2(1+\mathbb{E}[\mathrm{SNR}])^2} + ...\bigg],
\end{equation}
where the first term is the SEB. The dominant error term (DET) is the largest term not included in the bound. Therefore,
\begin{equation}
    \mathrm{DET} = \frac{\mathbb{E}\!\left[\mathrm{SNR}^2\right]-\mathbb{E}\left[\mathrm{SNR}\right]^2}{2\mathrm{ln}(2)(1+\mathbb{E}[\mathrm{SNR}])^2} = \frac{\mu_2-\mu_1^2}{2\mathrm{ln}(2)(1+\mu_1)^2}. \label{eq:DET}
\end{equation}
Note that when $\mu_1>\!>1$, the term $1$ in the denominator of (\ref{eq:DET}) can be ignored, so that the DET is approximately proportional to CV$^2$, showing that channel hardening improves the tightness of the bound.  Table \ref{tab:SEBerror} compares the DET and the SEB.
\begin{table}[ht]
    \centering
    \caption{Comparison of the SEB and DET for the $A=0.4 \,\mathrm{m}^2$ scenario.}
    \label{tab:SEBerror}
    \begin{tabular}{|c|c|c|c|}
    \hline
     $\kappa$& SEB & DET & DET/SEB (\%) \\
     \hline
     $0$ & 2.2760 &  0.5070 & 22.28\%\\
     $0.1$ & 2.0606 & 0.0497 & 2.41\%\\
     $0.5$ & 2.0908 & 0.0099 & 0.47\%\\
     $1$ & 2.0329 &  0.0048 & 0.24\%\\
     \hline
    \end{tabular}
\end{table}

Note that the DET monotonically decreases as correlation decreases, further verifying the behaviour observed in Fig. \ref{fig:meanSE}.



\subsection{SNR Outage Probability}
Figure \ref{fig:outageprob} verifies the analytical results in Section \ref{sec:outage} and investigates the SNR outage probability. The impacts of both the Jakes' and sinc correlation models, the RIS area and the RIS dimensions are considered. Results are plotted for two different continuous RIS surface areas, $A = W\times H = 0.3\, \mathrm{m}^2$ and $0.4\,\mathrm{m}^2$. For each area and correlation model, results are plotted for a ratio of width to height of $20:1$ and $1:1$.
\begin{figure}[ht]
    \centering
    \includegraphics[scale=0.75,trim={0.2cm 0.15cm 0.2 0.2cm},clip]{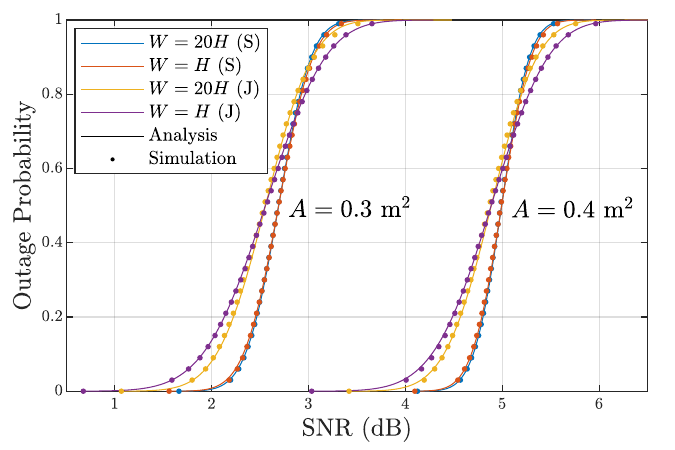}
    \caption{Comparison of the SNR outage probability for varied RIS areas, dimensions and correlation models.}
    \label{fig:outageprob}
\end{figure}

As seen in Section \ref{sec:meanSNRres}, as the surface area of the continuous RIS increases, the SNR increases. A larger RIS can capture and reflect more energy, thus increasing the SNR at which outages are more likely to occur. 

The SNR outage performance avails of lower correlation, which is visible through changes to the dimensions of the RIS for a fixed area. When the width is twenty times larger than the height, the larger gaps between points on the surface at opposite extremities lead to lower correlations when compared to a square RIS. This leads to a more pronounced averaging of the channel variations and a narrower CDF curve, indicating more stable and consistent outage probabilities. 

Similarly, the sinc correlation model results in narrower curves and a higher median SNR. The sinc function decreases at a faster rate with distance than the zeroth order Bessel function comprising the Jakes' model, leading to lower correlation. 

The naturally higher correlations provided by the Jakes' model result in the effects of the RIS dimensions becoming more pronounced. There is a larger gap between the $W=20H$ and $W=H$ Jakes' model curves for both RIS areas than between the corresponding sinc model curves. The increased variability in the channels from less averaging means that varying the RIS dimensions has a larger impact on the outage probability overall. Thus, RIS dimensions are an important consideration, particularly for higher correlation systems.

Finally, the impacts of all metrics discussed are more noticeable when the direct link is weak (i.e $\beta_\mathrm{d}$ is small). When this occurs, the RIS impacts a higher proportion of the total channel, and thus parameter changes can be more clearly seen.

\subsection{Channel Hardening}
Figure \ref{fig:channelhardening} investigates the channel hardening effect of the continuous RIS for various $\kappa$ values, system layouts and RIS surface areas. The sinc and Jakes' correlation models result in similar behaviour, so only the sinc model is shown. The channel hardening measure CV$^2$ is plotted for three layouts: $\{d_y, d_\mathrm{rb}, d_x\} = \{1, 40, 27\}$ (Setup A), $\{1, 40, 53\}$ (Setup B), and $\{1, 5, 27\}$ (Setup C). Note that an increase in channel hardening results in a decrease in CV$^2$.
\label{sec:channelhardening}
\begin{figure}[ht]
    \centering
    \includegraphics[scale=0.75, trim={0.05cm 0.15cm 0.2 0.2cm},clip]{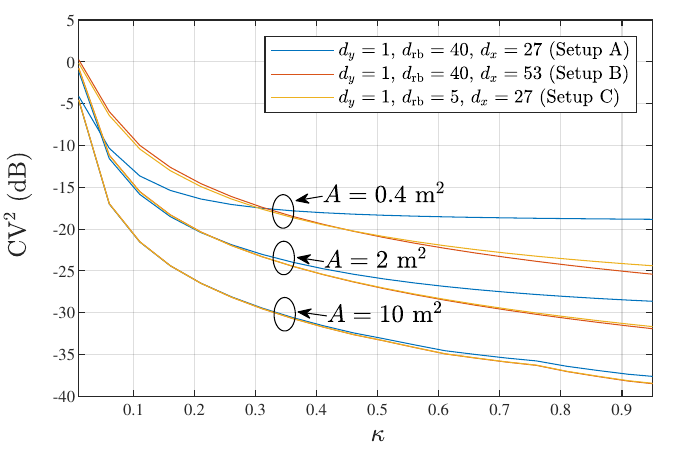}
    \caption{Comparison of the channel hardening capability (measured by CV$^2$) for various system layouts, RIS surface areas and correlations.}
    \label{fig:channelhardening}
\end{figure}

More channel hardening occurs as $\kappa$ increases. The channel hardening concept relies on summing a large number of \textit{independent} and identically distributed random variables (RVs), and an increase in $\kappa$ results in a decrease in the spatial correlation between pairs of points on the RIS. 

The channel hardening effect increases with the RIS surface area. Increasing the surface area leads to the averaging of channel fluctuations over more points, which is analogous to increasing the number of antennas in a massive MIMO system.

While both setups have the same $d_\mathrm{ur}$ and $d_\mathrm{rb}$, Setup A results in less channel hardening than Setup B. The direct channel has a larger effect to the SNR in Setup A due to the lower BS-user distance. The direct channel only has 32 antennas, so it cannot support hardening properties. Therefore, increasing its effect on the overall SNR reduces the channel hardening overall. Additionally, Setup C leads to more channel hardening than Setup A at high $\kappa$ values, despite their equivalent $d_\mathrm{d}$. The shorter $d_\mathrm{rb}$ of Setup C leads to an increased RIS contribution to the SNR and hence, more channel hardening. The difference between the setups decreases as the RIS surface area increases, as the RIS channel dominates the channel hardening.

\section{Conclusion}
This work investigated key statistics and behaviours of a continuous RIS system. We conceived the SNR-optimal reflection coefficient design for a continuous RIS, and expressions for the optimal mean SNR, an upper bound on the mean SE, the approximate SNR outage probability and the approximate CV were derived. Numerical results were simulated to verify these expressions and provide insights into the effects of key parameters on continuous RIS performance, including RIS area, RIS dimensions and correlation. These results also provided an upper bound to the performance of a discrete RIS as the number of elements in a fixed area tends to infinity. 

\section*{Appendix A \\ Proof of Proposition 1}
From \eqref{eq:channel3}, the SNR is given by 
\begin{equation}\label{A1}
\textrm{SNR}=||\mathbf{h}_{\mathrm{d}}+\sqrt{\beta_{\mathrm{rb}}}\mathbf{a}_\mathrm{b}ce^{j\nu}||^2 \frac{E_s}{\sigma^2}.
\end{equation}
Maximizing an expression of the form in \eqref{A1} requires maximizing $c$ and setting $e^{j\nu}=\mathbf{a}_\mathrm{b}^\dagger\mathbf{h}_{\mathrm{d}}/|\mathbf{a}_\mathrm{b}^\dagger\mathbf{h}_{\mathrm{d}}|$. Now,
\begin{align}
    &ce^{j\nu}=\int_{x=0}^W\int_{y=0}^H|\mathbf{h}_{\mathrm{ur}}(x,y)|e^{j\angle\boldsymbol{\psi}(x,y)} dydx\notag\\
    &\implies c\le \int_{x=0}^W\int_{y=0}^H|\mathbf{h}_{\mathrm{ur}}(x,y)| dydx = Y.\label{UB}
\end{align}
Note that the upper bound in \eqref{UB} is obtained when $\boldsymbol{\psi}(x,y)$ is any constant so setting $e^{j\boldsymbol{\psi}(x,y)}=\mathbf{a}_\mathrm{b}^\dagger\mathbf{h}_{\mathrm{d}}/|\mathbf{a}_\mathrm{b}^\dagger\mathbf{h}_{\mathrm{d}}|$ maximizes $c$, yielding $c=Y$, and results in the optimal phase shift \cite{singh_optimal_2022}. Substituting this optimal choice for $\boldsymbol{\psi}(x,y)$ into \eqref{eq:Phi} gives the 

\noindent optimal phase design in \eqref{eq:OptPhi}, and substituting $ce^{j\nu}$ with $Y\mathbf{a}_\mathrm{b}^\dagger\mathbf{h}_{\mathrm{d}}/|\mathbf{a}_\mathrm{b}^\dagger\mathbf{h}_{\mathrm{d}}|$ in \eqref{A1} and expanding the quadratic gives the optimal SNR in \eqref{eq:OptSNR} as required.

\section*{Appendix B \\ Proof of Lemma 1}
Let $r_1 = |\mathbf{h}_{\mathrm{ur}}(x,y)|$ and $r_2 = |\mathbf{h}_{\mathrm{ur}}(x',y')|$. Therefore,
    \begin{equation}
        \mathbb{E}[Y]= \mathbb{E}\!\left[\,\,\int\limits_{x=0}^W\int\limits_{y=0}^H r_1 dydx\right]= \int\limits_{x=0}^W\int\limits_{y=0}^H\mathbb{E}[r_1]dydx. \label{eq:EYint} 
    \end{equation}
    Using the result for the mean amplitude of a correlated Rayleigh process in (4.2) of \cite{miller_complex_1974} allows us to write
    \begin{align}
        \mathbb{E}[Y]&= \int\limits_{x=0}^W\int\limits_{y=0}^H\Gamma\left(\frac{3}{2}\right)\mathrm{Var}[\mathbf{h}_\mathrm{ur}(x,y)]^{1/2}dydx, \notag \\ &=\frac{\sqrt{\pi\beta_\mathrm{ur}}}{2}\int_{x=0}^W1dx\int_{y=0}^H1dy = \frac{\sqrt{\pi\beta_\mathrm{ur}}}{2}HW,\notag
    \end{align}
    as required. Similarly, $\mathbb{E}[Y^2]$ can be written as
    \begin{align}
         &\mathbb{E}[Y^2]= \mathbb{E}\!\left[\,\,\int\limits_{x'=0}^W \int\limits_{y'=0}^H\int\limits_{x=0}^W\int\limits_{y=0}^Hr_1r_2\,dydxdy'dx'\right], \notag \\ & \qquad\,\,\, = \int\limits_{x'=0}^W \int\limits_{y'=0}^H\int\limits_{x=0}^W\int\limits_{y=0}^H \mathbb{E}[r_1r_2] dydxdy'dx'.\label{eq:EY2int}
    \end{align}
Using (4.2) of \cite{miller_complex_1974} to evaluate $\mathbb{E}[r_1r_2]$ leads to
\begin{align}
    &\mathbb{E}[Y^2]\!=\!\!\!\!\!\int\limits_{x'=0}^W \int\limits_{y'=0}^H\int\limits_{x=0}^W\int\limits_{y=0}^H\!\!\frac{\pi}{4}\mathrm{Var}[|\mathbf{h}_{\mathrm{ur}}(x,y)|]^{\tfrac{1}{2}} \mathrm{Var}[|\mathbf{h}_{\mathrm{ur}}(x',y')|]^{\tfrac{1}{2}}\notag\\ &\qquad\qquad\times {}_2F_1\!\!\left(\!\frac{-1}{2},\!\frac{-1}{2};\!1;\!|\rho(x,y,x'\!,y')|^2\!\!\right)dydxdy'dx',\notag \\ &=\!\!\!\int\limits_{x'=0}^W\!\!\!\!\ldots\!\!\!\!\int\limits_{y=0}^H\frac{\pi\beta_\mathrm{ur}}{4}{}_2F_1\!\!\left(\!\frac{-1}{2},\!\frac{-1}{2};\!1;\!|\rho(x,y,x'\!,y')|^2\!\!\right)dydxdy'dx'. \notag
\end{align}
as required.

\section*{Appendix C \\ Proof of Lemma 2}
Using the definition of $g(r)$ for isotropic correlation, 
    \begin{align}
     \mathbb{E}[Y^2]&=\int\limits_{x'=0}^W \int\limits_{y'=0}^H\int\limits_{x=0}^W\int\limits_{y=0}^Hg(r)dydxdy'dx', \notag \\
     &= W^2H^2\!\!\!\int\limits_{x'=0}^W \int\limits_{y'=0}^H\int\limits_{x=0}^W\int\limits_{y=0}^H\frac{g(r)}{W^2H^2}dydxdy'dx',\notag \\
     &= W^2H^2\,\mathbb{E}[g(r)], \notag
 \end{align}
where the expectation is taken over $\frac{1}{W^2H^2}$, i.e., the joint density of two points uniformly located on the surface. Hence,
\begin{equation}
    \mathbb{E}[Y^2] = W^2H^2\int_{r=0}^{\sqrt{W^2+H^2}}\!\!\!g(r)f_s(r)dr,
\label{eq:EY2gf}
\end{equation}
where $f_s(r)$ is the probability density function (PDF) of the separation between two random points in a rectangle of dimensions $W\times H$. From \cite{mathai_random_1998}, 
\begin{equation}
\label{eq:fsr}
    f_s(r)\!=\!\frac{4r}{W^2\!H^2}\!\!
    \begin{cases}
        \frac{\pi WH}{2}{-}(W{+}H)r{+} \frac{r^2}{2},& r<H \vspace{0.5em} \\ \vspace{0.5em}
        \begin{aligned}
            WH\sin^{\!-1}(\tfrac{H}{r})-Wr \\ + W\!\sqrt{r^2{-}H^2}{-}\tfrac{H^2}{2},
        \end{aligned} & H {\leq}\, r\,{<} W \\
        \begin{aligned}
        W\!H\!\left(\!\sin^{\!\!\minus1}\!\!\left(\!\tfrac{H}{r}\!\right)\!{-}\!\cos^{\!\!\minus1}\!\!\left(\!\tfrac{W}{r}\!\right)\!\right) \\{-}\tfrac{W^2+H^2}{2} {+}W\!\sqrt{\!r^2{-}H^2} \\  {-}\tfrac{r^2}{2}{+} H\sqrt{r^2{-}W^2}.
        \end{aligned} & \begin{aligned}
            W {\leq}\,r\,{\leq} \\ \sqrt{\!W^2{+}H^2}
        \end{aligned} 
    \end{cases}
\end{equation}
Substituting (\ref{eq:fsr}) into (\ref{eq:EY2gf}) gives (\ref{eq:EY2simple}) as required.

\bibliographystyle{IEEEtran}
\bibliography{IEEEabrv, referencesCTS.bib}

\end{document}